\newtheorem{theorem}{Theorem}
\newtheorem{lemma}{Lemma}
\newtheorem{remark}{Remark}
\newcommand{\Cc}{\mathcal{C}}
\DeclareMathOperator{\erf}{erf}
\title{Iterative method for solving nonlinear integral
equations describing rolling solutions in String Theory}
\author{L.~Joukovskaya\\
Steklov Mathematical Institute\\
119991, Moscow, Gubkin St. 8}
\date{}
\begin{document}
\maketitle
\begin{abstract}
We consider a nonlinear integral equation with infinitely many
derivatives that appears when a system of interacting open and closed
strings is investigated if the nonlocality in the closed string sector
is neglected. We investigate the properties of this equation,
construct an iterative method for solving it, and prove that the
method converges.
\end{abstract}

Nonlinear equations with infinite number of time derivatives
have recently became the subject of investigation in both
standard and p-adic string theories
\cite{Sen,MolZw,Y,AJ,MS,BFOW,FO,BF,VV,VSV}. Properties of
some of them were systematically studied in \cite{VV}.
In present work further investigations of such equations will be
presented, we will consider equation of the following form which
appears in string theory
\begin{equation}
\label{eq1}
a\Phi^3(t)+(1-a)\Phi(t) = \exp\big(a\frac{d^2}{dt^2}\big)\Phi(t),
\end{equation}
where $a$ is a constant, $a\in(0,1]$.

Precise meaning to equation (\ref{eq1}) will be given below.
We note that in the case $a=1$, this equation takes a form
of the equation for $p$-adic string with $p=3$, which was considered
in the papers \cite{BFOW,Y,VV}.

Equation  (\ref{eq1}) arises when studying system of
interacting open and closed string in the approximation when
nonlocality in the interaction of closed string is neglected
\cite{Oh,LY,L1}. Note that resulting equations are still nonlocal
in terms of open string tachyon.
It is interesting to note that in the mechanical
approximation \cite{AJ,LY} this equation transforms to equation
$$
\Phi^3(t)-\Phi(t)=\frac{d^2}{dt^2}\Phi(t),
$$
which have well known kink solution
$$
\Phi(t)=\tanh(\frac{t}{\sqrt{2}}).
$$
Note that kink usually describes the solution depending on spatial coordinates.

In this paper  we will investigate the properties of equation
(\ref{eq1}) and consider boundary value problems for bounded
solution. In particular, we will construct rolling solutions
interpolating between two vacua.

Equation (\ref{eq1}) is a pseudo-differential equation with
symbol $e^{-a \xi^2}$ which for positive $a$ might be presented
as a nonlinear integral equation (in full accordance with the
equations considered in \cite{MolZw,VV,VSV})

\begin{equation}
\label{IE}
{C}_a[\Phi](t)=a\Phi^3(t)+(1-a)\Phi(t),
\end{equation}
where constant $a$:  $0< a < 1$ and operator ${C}_a$ is defined as
\begin{equation}
\label{ifou}
C_a[\psi] (t)=\int
_{-\infty}^{\infty} \Cc_{a}[(t-\tau)^{2}] \psi(\tau)d\tau,
\end{equation}
with the kernel
$$
\Cc_{a}[(t-\tau)^{2}]=\frac{1}{\sqrt{4\pi a}}e^{-\frac{(t-\tau)^{2}}{4a}}.
$$
We look for solutions of equation (\ref{IE}) in the class of
real-valued measurable functions.

\begin{theorem}
If solution $\Phi(t)$ of equation (\ref{IE}) is
bounded, then it satisfies the estimate
\begin{equation}
|\Phi(t)|\leq 1,~~ t\in \mathbb{R}
\end{equation}
\end{theorem}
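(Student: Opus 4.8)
The plan is to play off two features of the equation against each other: the operator $C_a$ cannot increase the supremum norm because its kernel is a probability density, whereas the cubic nonlinearity $g(x)=ax^{3}+(1-a)x$ strictly \emph{amplifies} values of modulus larger than $1$. Set $M=\sup_{t\in\mathbb{R}}|\Phi(t)|$, which is finite by the boundedness hypothesis. The goal is to show $M\leq 1$.

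First I would record the averaging property of the kernel. Since $\mathcal{C}_a[(t-\tau)^{2}]$ is nonnegative and, being the heat kernel, satisfies $\int_{-\infty}^{\infty}\mathcal{C}_a[(t-\tau)^{2}]\,d\tau=1$, one has for every $t$
$$
|C_a[\Phi](t)|\leq\int_{-\infty}^{\infty}\mathcal{C}_a[(t-\tau)^{2}]\,|\Phi(\tau)|\,d\tau\leq M\int_{-\infty}^{\infty}\mathcal{C}_a[(t-\tau)^{2}]\,d\tau=M,
$$
so that $C_a$ maps $\Phi$ into a function of supremum norm at most $M$.

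Next I would analyse the right-hand side of (\ref{IE}) through $g(x)=ax^{3}+(1-a)x$. Because $a\in(0,1)$, its derivative $g'(x)=3ax^{2}+(1-a)>0$, so $g$ is strictly increasing; being also odd and continuous, it satisfies $|g(x)|=g(|x|)$ for all $x$. Combining equation (\ref{IE}) with the bound above yields, for every $t$,
$$
g(|\Phi(t)|)=|g(\Phi(t))|=|C_a[\Phi](t)|\leq M.
$$
Now I would pass to the supremum by choosing a sequence $t_n$ with $|\Phi(t_n)|\to M$; continuity of $g$ carries the inequality $g(|\Phi(t_n)|)\leq M$ to the limit, giving $g(M)\leq M$, i.e. $aM^{3}+(1-a)M\leq M$. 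Cancelling yields $aM^{3}\leq aM$, and since $a>0$ this forces $M^{3}\leq M$, hence $M\leq1$, which is the claimed estimate.

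The step demanding the most care is the passage to the supremum: the value $M$ need not be attained at any point, so the argument cannot simply evaluate $g$ at a maximizer but must route through an approximating sequence and invoke the continuity (and monotonicity) of $g$. Once that is handled, the conclusion is an immediate algebraic consequence of the inequality $g(M)\leq M$, with everything upstream resting only on the fact that convolution against a probability density is a contraction in the supremum norm.
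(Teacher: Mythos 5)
Your proposal is correct and follows essentially the same route as the paper: bound $C_a[\Phi]$ by $M=\sup_t|\Phi(t)|$ using that the kernel is a probability density, then deduce $aM^3+(1-a)M\leq M$ and hence $M\leq 1$. The only difference is one of care, not of method: the paper simply asserts $\sup_t\left|a\Phi^3(t)+(1-a)\Phi(t)\right|=aM^3+(1-a)M$, whereas you justify exactly this step via the monotonicity and oddness of $g(x)=ax^3+(1-a)x$ together with an approximating sequence $t_n$, correctly noting that the supremum need not be attained.
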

\begin{proof}
Let us suppose that solution $\Phi(t)$ of equation (\ref{IE})
is bounded, i.e. there exists such a number $M>0$ that
\begin{equation}
\sup_{t}|\Phi(t)|=M.
\end{equation}
It follows from(\ref{IE}) and (\ref{ifou}) that
\begin{equation}
|a\Phi^3(t)+(1-a)\Phi(t)|=|\int^{+ \infty}_{- \infty} \Phi(\tau) \Cc_{a}[(t-\tau)^2]d\tau|\leq
\end{equation}
$$
\leq
\int^{+ \infty}_{- \infty} |\Phi(\tau)| \Cc_{a}[(t-\tau)^2]d\tau\leq
\sup_{\tau}|\Phi(\tau)|\int^{+ \infty}_{- \infty}\Cc_{a}[(t-\tau)^2]d\tau=M,
$$
and
$$
\sup_{t}|a\Phi^3(t)+(1-a)\Phi(t)|=aM^3+(1-a)M\leq M,
$$
i.e.
$M \leq 1$. This proves the theorem.
\end{proof}

\begin{remark}
\label{remark1}
The theorem 1, as well as theorems 4 and 5 proved below,
are similar to the corresponding theorems from papers \cite{MolZw, VV},
but are proved here using features of more general equation (\ref{IE}).
\end{remark}
\begin{lemma}
\label{lemma1}
If function $\Phi(t)$ is bounded, then function $C_{a}[\Phi](t)$ is continuous in $t$.
\end{lemma}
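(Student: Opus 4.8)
The plan is to recognize that, for bounded $\Phi$, the operator $C_a[\Phi]$ is simply the convolution of $\Phi$ with the fixed Gaussian kernel $g(s)=\frac{1}{\sqrt{4\pi a}}\,e^{-s^2/(4a)}$, namely $C_a[\Phi](t)=\int_{-\infty}^{\infty} g(t-\tau)\,\Phi(\tau)\,d\tau$. Since $g\in L^1(\mathbb{R})$ with $\int_{-\infty}^{\infty} g(s)\,ds=1$, and $|\Phi(\tau)|\le M:=\sup_\tau|\Phi(\tau)|<\infty$, the integral converges absolutely (exactly as in the proof of Theorem~1), so $C_a[\Phi](t)$ is well defined and $|C_a[\Phi](t)|\le M$ for every $t$. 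Thus continuity is a question about the regularity of a convolution, not about $\Phi$ itself.

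To prove continuity I would fix $t\in\mathbb{R}$ and an increment $h$, and estimate
\[
|C_a[\Phi](t+h)-C_a[\Phi](t)|\le\int_{-\infty}^{\infty}|g(t+h-\tau)-g(t-\tau)|\,|\Phi(\tau)|\,d\tau\le M\int_{-\infty}^{\infty}|g(s+h)-g(s)|\,ds,
\]
where the last inequality uses the boundedness of $\Phi$ together with the substitution $s=t-\tau$. In this way everything reduces to showing that the $L^1$ modulus of continuity of the fixed kernel, $\omega(h):=\int_{-\infty}^{\infty}|g(s+h)-g(s)|\,ds$, tends to $0$ as $h\to0$; this is precisely the statement that translation is continuous in $L^1(\mathbb{R})$, applied to the integrable function $g$.

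To keep the argument self-contained I would verify $\omega(h)\to0$ directly by dominated convergence. For $|h|\le1$ the integrand $|g(s+h)-g(s)|$ is dominated pointwise by $g(s+h)+g(s)\le 2\,G(s)$, where $G(s):=\sup_{|h|\le1}g(s+h)$; since $g$ is unimodal with a maximum at the origin, $G$ equals the constant $g(0)$ for $|s|\le1$ and has Gaussian tails $g(0)\,e^{-(|s|-1)^2/(4a)}$ for $|s|>1$, so $G\in L^1(\mathbb{R})$ and provides an $h$-independent integrable majorant. As $g$ is continuous, $g(s+h)-g(s)\to0$ pointwise when $h\to0$, and dominated convergence yields $\omega(h)\to0$. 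Combined with the displayed estimate this gives $|C_a[\Phi](t+h)-C_a[\Phi](t)|\le M\,\omega(h)\to0$, which proves the lemma.

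The main obstacle is exactly the interchange of limit and integral in this last step: one cannot pass $h\to0$ under the integral sign without an $h$-independent dominating function, and the only ingredients available are the boundedness of $\Phi$ and the shape of $g$. Constructing the envelope $G$ from the unimodality and rapid decay of the Gaussian is the one genuinely quantitative point; once that majorant is in hand the remainder is routine. I would expect the bound $M\,\omega(h)$, being independent of $t$, to yield \emph{uniform} continuity of $C_a[\Phi]$ at no extra cost, which is slightly stronger than the stated conclusion.
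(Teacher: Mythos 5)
Your proof is correct, and its skeleton coincides with the paper's: both arguments pull the absolute value inside the convolution, use $|\Phi|\le M$, and substitute to reduce everything to the $L^1$ modulus of continuity of the Gaussian kernel, $\omega(\delta)=\int_{-\infty}^{\infty}\bigl|g(s+\delta)-g(s)\bigr|\,ds$. Where you diverge is in the final step. The paper evaluates this integral \emph{exactly}: the two shifted Gaussians cross at the midpoint $s=-\delta/2$, so splitting the integral there and telescoping gives the closed-form bound $|C_a[\Phi](t+\delta)-C_a[\Phi](t)|\le 2M\erf\bigl(\tfrac{|\delta|}{4\sqrt{a}}\bigr)$, an explicit quantitative modulus of continuity. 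You instead prove $\omega(\delta)\to 0$ softly, by dominated convergence with the envelope $G(s)=\sup_{|h|\le 1}g(s+h)$, which you correctly identify as $g(0)$ on $|s|\le 1$ and $g(0)e^{-(|s|-1)^2/(4a)}$ outside; this is the standard continuity-of-translation argument in $L^1$ and is entirely valid. The trade-off: the paper's computation exploits the specific unimodal Gaussian shape to get a rate (which could matter for quantitative iterations later), while your argument gives no rate but works verbatim for \emph{any} integrable kernel, making the lemma's true generality visible. Your closing observation is also right, and applies equally to the paper's bound: since both estimates are independent of $t$, each proof in fact yields uniform continuity of $C_a[\Phi]$.
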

\begin{proof}
By the formulation of lemma the function $\Phi(t)$ is bounded,
i.e. there exists such a number $M>0$, for which
$\sup\limits_{t}|\Phi(t)|=M$. Let us consider the estimate
$$
\left|C_{a}[\Phi](t+\delta)-C_{a}[\Phi](t)\right|=
\frac{1}{\sqrt{4\pi a}}\left|\int_{-\infty}^{+\infty}
(e^{-\frac{((t+\delta)-\tau)^2}{4a}}-e^{-\frac{(t-\tau)^2}{4a}})\Phi(\tau)d\tau\right|\leqslant
$$
$$
\leqslant
\frac{1}{\sqrt{4\pi a}}\int_{-\infty}^{+\infty}
\left|e^{-\frac{((t+\delta)-\tau)^2}{4a}}-e^{-\frac{(t-\tau)^2}{4a}}\right|\cdot|\Phi(\tau)|d\tau
\leqslant
$$
$$
\leqslant\sup_{\tau}\Phi(\tau)\frac{1}{\sqrt{4 \pi a}}\int_{-\infty}^{+\infty}
\left|e^{-\frac{((t+\delta)-\tau)^2}{4a}}-e^{-\frac{(t-\tau)^2}{4a}}\right|d\tau\leqslant
$$
$$
=\frac{M}{\sqrt{4 \pi a}}\int_{-\infty}^{+\infty}
\left|e^{-\frac{(y+\delta)^2}{4a}}-e^{-\frac{y^2}{4a}}\right|dy
$$
Evaluating the absolute value in the integrand, we get
\begin{equation}
\label{C}
|C_{a}[\Phi](t+\delta)-C_{a}[\Phi](t)|
\leqslant 2M\erf\left(\frac{|\delta|}{4\sqrt{a}}\right)
\end{equation}
where $\erf(t)$ is an error function, which is defined by
\begin{equation}
\label{erf}
\erf(t)=\frac{2}{\sqrt{\pi}}\int_{0}^{t}e^{-\tau^2}d\tau.
\end{equation}
The estimate (\ref{C}) with (\ref{erf}) taken into account
shows that the function $C_{a}[\Phi](t)$ is continuous.
\end{proof}

\begin{theorem}
Any bounded solution $\Phi(t)$ of equation (\ref{IE}) is continuous.
\end{theorem}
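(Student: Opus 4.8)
The plan is to read equation (\ref{IE}) as a pointwise algebraic identity of the form $g(\Phi(t)) = C_a[\Phi](t)$, where $g(x) = ax^3 + (1-a)x$, and then to invert the polynomial $g$. The right-hand side requires no further work: since $\Phi$ is bounded by hypothesis, Lemma~\ref{lemma1} applies verbatim and tells us that $t \mapsto C_a[\Phi](t)$ is continuous on $\mathbb{R}$. Thus the entire argument reduces to transferring this continuity back to $\Phi$ through the relation $g(\Phi)=C_a[\Phi]$.

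First I would establish that $g$ is a continuous strictly increasing bijection of $\mathbb{R}$ onto $\mathbb{R}$. Differentiating gives $g'(x) = 3ax^2 + (1-a)$, and since $a \in (0,1)$ we have $g'(x) \geq 1-a > 0$ for every $x$; hence $g$ is strictly increasing. Combined with $g(x) \to \pm\infty$ as $x \to \pm\infty$ and the continuity of $g$, this shows that $g$ is a homeomorphism of $\mathbb{R}$, so that its inverse $g^{-1}$ exists and is continuous on all of $\mathbb{R}$.

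Then I would conclude by composition. From (\ref{IE}) we may write $\Phi(t) = g^{-1}\big(C_a[\Phi](t)\big)$ for every $t$, exhibiting $\Phi$ as the composition of the continuous map $t \mapsto C_a[\Phi](t)$ with the continuous inverse $g^{-1}$. A composition of continuous functions is continuous, so $\Phi$ is continuous, as claimed.

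The single delicate point is the invertibility of $g$, and this is exactly where the restriction $0 < a < 1$ does the work: the linear term $(1-a)\Phi$ keeps $g'$ bounded away from zero, so $g^{-1}$ is not merely well-defined but in fact Lipschitz. (In the degenerate case $a=1$ one would instead have $g(x)=x^3$, whose inverse remains continuous but loses this uniform control near the origin; since the integral equation (\ref{IE}) is posed for $0<a<1$, this case does not arise here.) No other step presents difficulty, as all the analytic content has already been absorbed into Lemma~\ref{lemma1}.
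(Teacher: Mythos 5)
Your proof is correct, and it takes a genuinely different (and in fact tighter) route than the paper's. The paper argues by contradiction: assuming $\Phi$ discontinuous, it asserts that $a\Phi^3(t)$ and $(1-a)\Phi(t)$ are then discontinuous and hence so is their sum, contradicting the continuity of $C_a[\Phi]$ from Lemma~\ref{lemma1}. That contrapositive step is stated without justification, and as stated it is incomplete: a sum of two discontinuous functions can perfectly well be continuous, and more to the point, for a general nonlinearity $g$ the continuity of $g\circ\Phi$ does not force continuity of $\Phi$ (take $g(x)=x^2$ and $\Phi(t)=\operatorname{sgn}(t)$ with $\Phi(0)=1$, so that $g\circ\Phi\equiv 1$). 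What rescues the paper's implication is exactly the ingredient you isolate: $g(x)=ax^3+(1-a)x$ satisfies $g'(x)=3ax^2+(1-a)\geq 1-a>0$, so $g$ is a strictly increasing homeomorphism of $\mathbb{R}$, and $\Phi=g^{-1}\circ(C_a[\Phi])$ is continuous by composition. Your direct argument thus supplies the precise justification the paper's contradiction argument leaves implicit, and it buys something extra: since $|g(x)-g(y)|\geq(1-a)|x-y|$, the inverse $g^{-1}$ is Lipschitz with constant $(1-a)^{-1}$, so $\Phi$ inherits not just continuity but any modulus of continuity of $C_a[\Phi]$ --- in particular, combined with the estimate (\ref{C}) of Lemma~\ref{lemma1}, this gives a quantitative modulus of continuity for any bounded solution, which the paper's qualitative argument cannot produce. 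Your parenthetical on $a=1$ is also accurate: there $g^{-1}$ remains continuous but loses the Lipschitz bound at the origin, and the case is outside the hypothesis $0<a<1$ of equation (\ref{IE}) anyway.
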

\begin{proof}
By Lemma 1 the functional  $C_{a}[\Phi](t)$ is continuous.
Let us prove that the solution $\Phi(t)$ itself is continuous.
Let us suppose the opposite, i.e. that $\Phi(t)$
is not continuous; it then follows that functions
$a\Phi^3(t)$ and  $(1-a)\Phi(t)$ are also not continuous.
Taking into account that $a\in [0,1]$, we get that the function
$a\Phi^3(t)+(1-a)\Phi(t)$ is also not continuous, which contradicts
the statement that function $\Phi(t)$ solves equation (\ref{IE}) since
the function ${\cal C}_a\Phi(t)$ should be continuous.
Thus the continuity of the function $\Phi(t)$ is proved.
\end{proof}

\begin{theorem}
If solution  $\Phi(t)$ of equation (\ref{IE}) is positive and
bounded, then result of acting with operator $C_{a}$ is
decreasing, i.e. $C_{a}[\Phi](t)\leq \Phi(t)$, and if solution
$\Phi(t)$ of the equation (\ref{IE}) is negative and bounded,
then result of acting with operator $C_{a}$ is increasing, i.e.
$C_{a}[\Phi](t)\geq \Phi(t)$.
\end{theorem}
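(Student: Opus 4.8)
The plan is to use the defining equation (\ref{IE}) to replace the nonlocal quantity $C_a[\Phi](t)$ by the purely pointwise expression $a\Phi^3(t)+(1-a)\Phi(t)$, so that both claimed inequalities collapse to elementary algebraic inequalities for the single real number $\Phi(t)$ at each fixed $t$. The only input I would need beyond (\ref{IE}) is the a priori bound $|\Phi(t)|\le 1$ for all $t\in\mathbb{R}$, which is exactly the content of Theorem 1 and which applies here because $\Phi$ is assumed bounded.

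For the positive case, fix $t$ and abbreviate $x=\Phi(t)$, so that $0<x\le 1$. By (\ref{IE}) the desired inequality $C_a[\Phi](t)\le \Phi(t)$ is equivalent to $ax^3+(1-a)x\le x$, that is, to $a(x^3-x)\le 0$. Since $a>0$ (recall $0<a<1$ in (\ref{IE})), I would divide by $a$ and factor $x^3-x=x(x-1)(x+1)$. On the range $0<x\le 1$ the factor $x$ is positive, $x-1\le 0$, and $x+1>0$, so the product is $\le 0$, which gives $C_a[\Phi](t)\le\Phi(t)$.

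The negative case is completely symmetric. With $-1\le x<0$ the desired $C_a[\Phi](t)\ge\Phi(t)$ reduces in the same way to $a(x^3-x)\ge 0$, and in the factorization $x(x-1)(x+1)$ now $x<0$, $x-1<0$, and $x+1\ge 0$, so the product is $\ge 0$. I do not anticipate a genuine obstacle, since the entire argument is pointwise; the only points requiring care are that the reduction relies on $a>0$ (so that dividing by $a$ preserves the sense of the inequality) and on the bound $|x|\le 1$ furnished by Theorem 1 (so that the relevant factors of the cubic have the stated signs). Both are guaranteed by the hypotheses $a\in(0,1)$ and the boundedness of $\Phi$, so the proof is essentially a sign analysis of the cubic $x^3-x$ on $[-1,1]$.
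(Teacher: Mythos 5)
Your proposal is correct and follows essentially the same route as the paper: both use the equation (\ref{IE}) to replace $C_a[\Phi](t)$ by the pointwise expression $a\Phi^3(t)+(1-a)\Phi(t)$ and then invoke the bound $|\Phi(t)|\leq 1$ from Theorem 1 to conclude $a\Phi^3(t)+(1-a)\Phi(t)\leq\Phi(t)$ in the positive case, with the negative case symmetric. Your explicit factorization $x^3-x=x(x-1)(x+1)$ merely spells out the sign analysis the paper leaves implicit.
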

\begin{proof}
Let us prove the first statement of the theorem.
Following assumption that the solution $\Phi(t)$
of the equation (\ref{IE}) is positive and bounded,
by Theorem 1 we get $|\Phi(t)|\leq 1$ and thus
$$
C_{a}[\Phi](t)=a\Phi^3(t)+(1-a)\Phi(t) \leq \Phi(t)
$$
i.e. $C_{a}[\Phi](t)\leq \Phi(t)$ for positive and bounded solutions.

The second statement of the Theorem is proved analogously.
\end{proof}

\begin{theorem}
If solution $\Phi(t)$ of equation (\ref{IE}) has a definite limit as $t\to \infty$,
then the limit takes one of the following possible values: $-1$, $0$, $1$.
\end{theorem}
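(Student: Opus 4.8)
The plan is to pass to the limit $t \to \infty$ on both sides of the integral equation (\ref{IE}) and to show that this yields an algebraic equation for the limiting value $L := \lim_{t\to\infty}\Phi(t)$ whose only roots are $-1$, $0$, and $1$. The right-hand side is the easy part: since $\Phi(t) \to L$, continuity of the cubic $x \mapsto ax^3 + (1-a)x$ gives $a\Phi^3(t) + (1-a)\Phi(t) \to aL^3 + (1-a)L$. The real work lies in showing that the left-hand side $C_a[\Phi](t)$ also converges, and that its limit is exactly $L$.

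To this end I would rewrite the convolution by the substitution $y = \tau - t$, obtaining
$$
C_a[\Phi](t) = \frac{1}{\sqrt{4\pi a}}\int_{-\infty}^{+\infty} e^{-\frac{y^2}{4a}}\,\Phi(t+y)\,dy.
$$
For each fixed $y$ the integrand tends pointwise to $\frac{1}{\sqrt{4\pi a}}e^{-y^2/(4a)}L$ as $t\to\infty$. As in the preceding theorems we work with bounded solutions, so Theorem 1 gives $|\Phi|\leq 1$, and the integrand is therefore dominated, uniformly in $t$, by the integrable function $\frac{1}{\sqrt{4\pi a}}e^{-y^2/(4a)}$. By the dominated convergence theorem the limit and the integral may be interchanged, so that
$$
\lim_{t\to\infty} C_a[\Phi](t) = \frac{L}{\sqrt{4\pi a}}\int_{-\infty}^{+\infty} e^{-\frac{y^2}{4a}}\,dy = L,
$$
where the last equality uses that the kernel $\Cc_a$ has total mass $1$, exactly the normalization already exploited in the proof of Theorem 1.

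Passing to the limit in (\ref{IE}) then gives the algebraic relation $L = aL^3 + (1-a)L$, i.e. $a\,(L - L^3) = aL(1-L)(1+L) = 0$. Since $a > 0$, this forces $L \in \{-1, 0, 1\}$, which is the assertion; the argument for $t \to -\infty$ is identical. I expect the main obstacle to be the rigorous justification of the limit interchange on the left-hand side: the cleanest route is dominated convergence as above, and the essential input is the uniform bound $|\Phi|\leq 1$ supplied by Theorem 1, without which one would have to control the growth of $\Phi$ in the tails of the Gaussian by hand.
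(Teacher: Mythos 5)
Your proposal is correct, and its skeleton---pass to the limit in (\ref{IE}), show $\lim_{t\to\infty}C_a[\Phi](t)=L$, then solve the cubic $aL^3+(1-a)L=L$---is exactly the paper's. Where you differ is in how the limit interchange for the convolution is handled. The paper splits the integral at $\tau=0$, substitutes $\tau=t-u$ in one piece and $\tau=t+u$ in the other, and then passes the limit under the integral sign formally: the piece $\int_t^{\infty}e^{-u^2/4a}\,\Phi(t-u)\,du$ is declared to vanish and the piece $\int_{-t}^{\infty}e^{-u^2/4a}\,\Phi(t+u)\,du$ to tend to $b\int_{-\infty}^{\infty}e^{-u^2/4a}\,du$, with no justification of the interchange given. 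You instead use the single substitution $y=\tau-t$ and invoke dominated convergence, with the Gaussian as dominating function and the bound $|\Phi|\leq 1$ from Theorem 1 as the explicit input; this is the cleaner and genuinely rigorous version of the same computation, since the paper's own passage to the limit would likewise require a domination (or uniform integrability) argument that it leaves tacit. You are also right to flag boundedness as the essential hypothesis: the theorem as stated assumes only a definite limit at $+\infty$, and both your proof and the paper's silently rely on $\Phi$ being bounded on all of $\mathbb{R}$ so that the tail of the Gaussian over $(-\infty,0)$ is controlled---your write-up at least makes this dependence explicit, whereas the paper's change-of-variables makes the mechanism (the Gaussian mass concentrating where $\Phi\approx b$) more visible but hides the analytic justification. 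Both routes yield the same algebraic equation $ab^3+(1-a)b=b$ and the same roots $-1$, $0$, $1$.
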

\begin{proof}
Following the formulation of the Theorem solution of equation (\ref{IE}) has a limit
$\lim\limits_{t\to+\infty}\Phi(t)$, let $\lim\limits_{t\to+\infty}\Phi(t)=b$,
calculating $C_{a}[\Phi(t)]$
$$
\lim_ {t\to+\infty}C_{a}[\Phi(t)]\lim_ {t\to+\infty}\frac{1}{\sqrt{4\pi a}}\int_{-\infty}^{+\infty}
e^{-\frac{(t-\tau)^{2}}{4a}}\Phi(\tau)d\tau
$$
$$
\frac{1}{\sqrt{4\pi a}}\left[\lim_ {t\to+\infty}\int_{-\infty}^{0}
e^{-\frac{(t-\tau)^{2}}{4a}}\Phi(\tau)d\tau+
\lim_ {t\to+\infty}\int_{0}^{+\infty}
e^{-\frac{(t-\tau)^{2}}{4a}}\Phi(\tau)d\tau\right]
$$
and making change of variables $\tau \to t-u$ in the first integral
and $\tau \to t+u$ in the second one we get
$$
\frac{1}{\sqrt{4\pi a}}\left[\lim_ {t\to+\infty}\int_{t}^{+\infty}
e^{-\frac{u^{2}}{4a}}\Phi(t-u)du+
\lim_ {t\to+\infty}\int_{-t}^{+\infty}
e^{-\frac{u^{2}}{4a}}\Phi(t+u)du\right]=$$
 $$
=\frac{1}{\sqrt{4\pi a}} \left[ \int_{+\infty}^{+\infty}
e^{-\frac{u^{2}}{4a}}b du+
\int_{-\infty}^{+\infty}
e^{-\frac{u^{2}}{4a}} b du\right]= \frac{b}{\sqrt{4\pi a}}\int_{-\infty}^{+\infty}
e^{-\frac{u^{2}}{4a}} du,
$$
i.e. we obtain that $\lim\limits_ {t\to+\infty}C_{a}[\Phi(t)]=b$.

Taking the limit $t\to+\infty$ in the equation (\ref{IE}), we
obtain the equation
$$
ab^3+(1-a)b=b,
$$
which has three roots $b=-1$, $b=0$, and $b=1$.

Subsequently we proved that the limit might takes only the
following values $0$, $-1$ è $1$.
\end{proof}

\begin{theorem}
There exists a unique nonnegative bounded continuous solution
$\Phi(t)\equiv 1$ of equation (\ref{IE}) that satisfies the
boundary conditions
\begin{equation}
\label{krz1}
\lim_{t\to\pm\infty} \Phi(t)= 1
\end{equation}
\end{theorem}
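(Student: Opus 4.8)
The plan is to establish existence and uniqueness separately. Existence is immediate: substituting the constant $\Phi\equiv 1$ into (\ref{IE}), the right-hand side equals $a+(1-a)=1$, while $C_a[1](t)=\int_{-\infty}^{+\infty}\Cc_{a}[(t-\tau)^{2}]\,d\tau=1$ by the normalization of the Gaussian kernel, so $\Phi\equiv 1$ solves the equation and trivially satisfies the boundary conditions (\ref{krz1}). The substance of the theorem is therefore uniqueness.

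For uniqueness I would run a minimum-principle argument based on the strict positivity of the kernel. Let $\Phi$ be any nonnegative, bounded, continuous solution satisfying (\ref{krz1}). By Theorem 1 we have $0\le\Phi(t)\le 1$ for all $t$, so $m:=\inf_{t}\Phi(t)$ is well defined and lies in $[0,1]$. The goal is to show $m=1$, which forces $\Phi\equiv 1$ since $\Phi\le 1$ everywhere.

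First I would show that if $m<1$ then the infimum is attained at a finite point. Because $\Phi(t)\to 1$ as $t\to\pm\infty$ and $m<1$, there is an $R>0$ with $\Phi(t)>(1+m)/2>m$ for $|t|\ge R$; on the compact interval $[-R,R]$ the continuous function $\Phi$ attains its minimum, and since all values outside $[-R,R]$ are strictly larger than $m$, this minimum must equal $m$. Denote by $t_0$ a point with $\Phi(t_0)=m$. Evaluating (\ref{IE}) at $t_0$ and using that $\Cc_{a}[(t_0-\tau)^{2}]>0$ for all $\tau$ together with the fact that $\Phi(\tau)>m$ on a set of positive measure (for instance a neighbourhood of $+\infty$), I obtain the strict inequality
$$
a m^3+(1-a)m = C_a[\Phi](t_0)=\int_{-\infty}^{+\infty}\Cc_{a}[(t_0-\tau)^{2}]\Phi(\tau)\,d\tau > m\int_{-\infty}^{+\infty}\Cc_{a}[(t_0-\tau)^{2}]\,d\tau=m.
$$

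From $am^3+(1-a)m>m$ and $a>0$ it follows that $m^3>m$, i.e.\ $m(m-1)(m+1)>0$; but for $m\in[0,1)$ one has $m\ge 0$, $m-1<0$, and $m+1>0$, so $m(m-1)(m+1)\le 0$, a contradiction. Hence $m<1$ is untenable, so $m=1$ and therefore $\Phi\equiv 1$. I expect the two delicate points to be the justification that the infimum is attained at a finite $t_0$ (which is exactly where the boundary condition (\ref{krz1}) enters) and the passage from a weak to a strict inequality in the kernel estimate. The nonnegativity hypothesis is essential: for $m<0$ the sign analysis of $m(m-1)(m+1)$ no longer produces a contradiction, which is consistent with the existence of the sign-changing rolling solutions interpolating between $-1$ and $1$ sought later in the paper.
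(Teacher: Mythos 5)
Your proposal is correct and follows essentially the same route as the paper: a minimum-principle argument at a point $t_0$ where the infimum is attained (which the boundary condition (\ref{krz1}) guarantees), combined with the positivity and normalization of the Gaussian kernel and the bound $0\le\Phi\le 1$ from Theorem~1. The only variation is that you extract a \emph{strict} inequality $am^3+(1-a)m>m$ from the strict positivity of the kernel together with $\Phi(\tau)>m$ near infinity, which eliminates $m=0$ and $m\in(0,1)$ in one stroke, whereas the paper derives the weak inequality (\ref{ner}), concludes $\Phi^{*}(t_0)\in\{0,1\}$, and then rules out $\Phi^{*}(t_0)=0$ by the separate observation that $C_a[\Phi^{*}](t_0)=0$ with a strictly positive kernel would force $\Phi^{*}\equiv 0$, contradicting (\ref{krz1}); your version is slightly tighter and, unlike the paper, spells out why the infimum is attained at a finite point.
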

\begin{proof}
Note that $\Phi(t)\equiv 1$ is a solution of the boundary value problem
in \ref{IE}-\ref{krz1}).
Let $\Phi^{*}(t),~0 \leq \Phi^{*}(t)\not\equiv 1$ be another bounded solution of
this problem. Then
$0 \leq \Phi^{*}(t) \leq 1$ and by (\ref{krz1}) there exists a $t_{0}$ such that
\begin{equation}
\label{oc}
0 \leq \Phi^{*}(t_0)=\min_{t}\Phi^{*}(t)\leq 1.
\end{equation}
From equation (\ref{IE}) we have
\begin{equation}
\label{ner}
a{\Phi^{*}}^3(t_0)+(1-a)\Phi^{*}(t_0)
=\int^{+ \infty}_{- \infty} \Phi^{*}(\tau) \Cc _{a}[(t_0-\tau)^2]d\tau \geq \Phi^{*}(t_0),
\end{equation}
this inequality holds if $\Phi^{*}(t_0)\geq 1$ and
$\Phi^{*}(t_0)=0$. Taking into consideration that
$|\Phi^{*}(t)| \leq 1$, we obtain that inequality (\ref{ner})
holds only if $\Phi^{*}(t_0)=0$ or $\Phi^{*}(t_0)=1$, but the
solution of inequality (\ref{ner}), $\Phi^{*}(t_0)=0$, does not
satisfy the boundary problem (\ref{krz1}). Hence we consider
$\Phi^{*}(t) \geq 0$ and  $\exists t_0$ :
$C_{a}[\Phi^{*}](t_0)=0$, then $\Phi^{*}(t)\equiv 0$, i.e.
there exists a unique nonnegative solution
$\Phi^{*}(t) \equiv \Phi (t)\equiv1$ of the boundary value problem (\ref{krz1}), (\ref{IE}).
\end{proof}

\begin{theorem}
There exists a continuous solution of equation(\ref{IE})
that satisfies the boundary conditions
\begin{equation}
\label{krz2}
\lim \Phi(t)=
  \begin{cases}
  1, &t\to\infty\\
  -1, &t\to-\infty.
  \end{cases}
\end{equation}
Moreover the iterative procedure
\begin{equation}
\label{IE-IP}
{\cal C}_a\Phi_{n}=a\Phi_{n+1}^3+(1-a)\Phi_{n+1}
\end{equation}
converges to this solution.
\end{theorem}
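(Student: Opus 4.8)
The plan is to recast the iteration \eqref{IE-IP} as a fixed-point iteration and exploit monotonicity. Since $f(x):=ax^3+(1-a)x$ satisfies $f'(x)=3ax^2+(1-a)>0$ for $a\in(0,1)$, it is a strictly increasing odd bijection of $\mathbb{R}$ with $f(\pm1)=\pm1$, so $f^{-1}$ is well defined, continuous, odd, increasing, and maps $[-1,1]$ onto $[-1,1]$. Thus \eqref{IE-IP} reads $\Phi_{n+1}=T\Phi_n$ with $T\Phi:=f^{-1}(C_a[\Phi])$, and a solution of \eqref{IE} is a fixed point of $T$. First I would fix the working class $K$ of functions that are odd, nondecreasing in $t$, and valued in $[-1,1]$, and check that $T$ maps $K$ into itself: oddness is preserved because the kernel $\Cc_a$ is even and $f^{-1}$ is odd; monotonicity in $t$ is preserved because $C_a$ is convolution with a positive kernel; the bound $|C_a[\Phi]|\le1$ comes from the estimate in Theorem 1, and $f^{-1}([-1,1])=[-1,1]$. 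By oddness everything is determined on $t\ge0$, where members of $K$ satisfy $0\le\Phi(t)\le1$.

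The key step is a \emph{reflected} monotonicity for $T$. Introduce the order $\Phi\preceq\Psi$ meaning $\Phi(t)\le\Psi(t)$ for all $t\ge0$. For $g$ odd with $g\ge0$ on $t\ge0$ one has, for $t\ge0$,
\[
C_a[g](t)=\int_0^{\infty}\Big(\Cc_a[(t-\tau)^2]-\Cc_a[(t+\tau)^2]\Big)g(\tau)\,d\tau\ge0,
\]
because $|t-\tau|\le t+\tau$ and $\Cc_a$ is a Gaussian, hence decreasing in $|\cdot|$. Applying this to $g=\Psi-\Phi$ shows $C_a[\Phi]\preceq C_a[\Psi]$, and since $f^{-1}$ is increasing, $T\Phi\preceq T\Psi$. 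Thus $T$ is monotone for $\preceq$ on $K$.

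Now I would run the monotone scheme from the $\preceq$-maximal element $\Phi_0=\operatorname{sgn}(t)$. Since $\Phi_0\equiv1$ on $t>0$, we have $T\Phi_0\preceq\Phi_0$, so monotonicity of $T$ yields a $\preceq$-decreasing sequence $\Phi_n=T^n\Phi_0$, bounded below by $0$ on $t\ge0$; it converges pointwise to some $\Phi\in K$. Dominated convergence (domination by $1$) gives $C_a[\Phi_n](t)\to C_a[\Phi](t)$, and continuity of $f^{-1}$ gives $\Phi_{n+1}=f^{-1}(C_a[\Phi_n])\to f^{-1}(C_a[\Phi])$; comparing with $\Phi_{n+1}\to\Phi$ yields $\Phi=f^{-1}(C_a[\Phi])$, i.e. $\Phi$ solves \eqref{IE}. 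Continuity of $\Phi$ then follows from Lemma \ref{lemma1} together with Theorem 2. Being monotone and bounded, $\Phi$ has limits at $\pm\infty$, which by Theorem 5 lie in $\{-1,0,1\}$; oddness forces these to be $b$ and $-b$ with $b\in\{0,1\}$.

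The one thing that does not come for free, and which I expect to be the main obstacle, is excluding the trivial limit $b=0$ (collapse to $\Phi\equiv0$). To this end I would produce a nontrivial subsolution, i.e. an element $\underline\Phi\in K$, $\underline\Phi\not\equiv0$, with $\underline\Phi\preceq T\underline\Phi$, equivalently $f(\underline\Phi)\le C_a[\underline\Phi]$ on $t\ge0$. A natural candidate is $\underline\Phi(t)=\epsilon\tanh(\lambda t)$ with $\epsilon\in(0,1)$ and $\lambda>0$ small: using the identity $\underline\Phi-f(\underline\Phi)=a\underline\Phi(1-\underline\Phi^2)\ge0$ together with the fact that $C_a=\exp(a\,d^2/dt^2)$ displaces a slowly varying profile only by $O(\lambda^2)$, so that $C_a[\underline\Phi]-\underline\Phi=O(\lambda^2)$, the nonnegative gap $a\underline\Phi(1-\underline\Phi^2)$ dominates the convolution error for $\lambda$ small, giving the subsolution inequality. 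Once $\underline\Phi$ is in hand, $\underline\Phi\preceq\Phi_0$ and induction with monotonicity of $T$ give $\underline\Phi\preceq\Phi_n$ for all $n$, hence $\underline\Phi\preceq\Phi$; therefore $\lim_{t\to+\infty}\Phi(t)\ge\epsilon>0$, which rules out $b=0$ and forces $b=1$. By oddness $\lim_{t\to-\infty}\Phi(t)=-1$, so $\Phi$ is the desired rolling solution and the scheme converges to it. The delicate point throughout is making the subsolution estimate rigorous uniformly in $t$, since it is exactly what pins the limit to the kink rather than to the unstable vacuum $0$.
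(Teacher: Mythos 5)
Your architecture is sound and genuinely different from the paper's: the paper seeds the half-line iteration \eqref{itpr} directly with an explicit subsolution $\phi_0=\tfrac12\bigl(1-e^{-(at)^2}\bigr)$ and iterates \emph{upward}, so the monotone limit $f$ is automatically trapped between $\phi_0$ and $1$; since $\lim_{t\to+\infty}\phi_0=1/2$, Theorem 5 immediately pins the boundary value to $1$ with no separate barrier argument. You instead iterate \emph{downward} from the maximal element $\operatorname{sgn}(t)$ and must then exclude collapse to the unstable vacuum by a separate comparison. Everything in your scheme up to the identification of the limit is correct and essentially mirrors the paper's ingredients (kernel positivity of $\mathcal{K}_a$ giving order preservation, preservation of oddness and monotonicity in $t$, dominated convergence, continuity via Lemma \ref{lemma1} and Theorem 2, and Theorem 5 restricting the limit to $\{0,1\}$). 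But this means the entire weight of your proof rests on the subsolution inequality $f(\underline\Phi)\le C_a[\underline\Phi]$ for $\underline\Phi=\epsilon\tanh(\lambda t)$, and that step, which you yourself flag, is not merely delicate --- as written it fails near $t=0$. You need $C_a[\underline\Phi](t)-\underline\Phi(t)\ge -a\underline\Phi(t)\bigl(1-\underline\Phi^2(t)\bigr)$, and the gap on the right vanishes linearly at the origin (like $a\epsilon\lambda t$), whereas your claimed control of the convolution displacement is a \emph{uniform} $O(\lambda^2)$; a uniform error cannot be dominated by a vanishing gap, so for $t$ near $0$ the inequality does not follow from what you wrote, no matter how small $\lambda$ is.

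The repair is available but must be supplied: since $\underline\Phi$ is odd and the kernel even, the error $E(t)=C_a[\underline\Phi](t)-\underline\Phi(t)$ is itself odd, $E(0)=0$, and $E'=C_a[\underline\Phi']-\underline\Phi'$ satisfies $\|E'\|_\infty\le a\|\underline\Phi'''\|_\infty=O(a\epsilon\lambda^3)$, so $|E(t)|\le Ca\epsilon\lambda^3 t$; combining this near the origin (where the gap is $\ge c\,a\epsilon(1-\epsilon^2)\lambda t$ for $\lambda t\le1$) with the global bound $|E|\le a\|\underline\Phi''\|_\infty=O(a\epsilon\lambda^2)$ away from the origin (where the gap is $\ge a\epsilon(1-\epsilon^2)\tanh 1$) closes the inequality once $\lambda^2\ll 1-\epsilon^2$. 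Without this two-regime refinement your proof proves only that the downward iteration converges to \emph{some} odd monotone solution with limits $\pm b$, $b\in\{0,1\}$, which could a priori be $\Phi\equiv0$. It is worth noting, in fairness, that the paper's own verification that its seed is a subsolution is likewise asserted by a computation rather than proved in detail (and its stated formula for $K_a\phi_0$ is suspect, since $\phi_0$ there is even while $K_a$ acts by odd reflection), so both arguments ultimately hinge on one concrete subsolution inequality; the paper's choice of monotone direction simply lets that single inequality do double duty as both the starting point and the lower barrier.
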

\begin{proof}
Taking into account the invariance of the equation (2) under
the change of variables $\phi(t) \to -\phi(-t)$, we will seek
for odd solutions. The boundary value problem (2), (12)
rewrites on the positive semiaxis $t\geq 0$ as
\begin{equation}
\label{14}
{K}_a [\phi](t)=a\phi^3(t)+(1-a)\phi(t),~~~~t\geq 0,
\end{equation}
and
\begin{equation}
\label{14z}
\lim_{t\to+\infty} \phi(t)=1
\end{equation}
where the operator ${K}_a $ is given by
$$
{K}_a [\phi](t)=\int_{0}^{\infty}{\cal K}_{a}(t,\tau)\phi(\tau)d\tau,
$$
with
$$
{\cal K}_{a}(t,\tau)=\frac{1}{\sqrt{4 \pi a}}
\left[ e^{-\frac{(t-\tau)^2}{4a}}- e^{-\frac{(t+\tau)^2}{4a}} \right].
$$
Solution of the original problem  $\Phi(t)$ is given by
\begin{equation}
\label{14zz}
\Phi(t)=
\begin{cases}
\phi(t),  &t\geq 0,\\
-\phi(-t),&t<0.
\end{cases}
\end{equation}

We will seek solution of equation (\ref{IE}), using an
iterative procedure (\ref{IE-IP}) which on the semiaxis $t\geq
0$ takes the following form
\begin{equation}
\label{itpr}
{K}_a\phi_{n}=a\phi_{n+1}^3+(1-a)\phi_{n+1}
\end{equation}
Solving the cubic equation \ref{itpr}) for $\phi_{n+1}$, we
obtain
\begin{subequations}
\begin{equation}
\label{phin}
\phi_{n+1}=-v_{n}(1-a)+\frac{1}{3av_{n}},
\end{equation}
where
\begin{equation}
\label{vn}
v_{n}=\left(\frac{2}{27a^2 B_n +\sqrt{108(1-a)^3a^3+729 a^4 B_n^2}}\right)^{1/3},
\end{equation}
and
\begin{equation}
B_n=K_a \phi_n.
\end{equation}
\end{subequations}
As initial iteration we take the function
$$
\phi_0 = \frac{1}{2}(1-e^{-(a t)^2}).
$$
Acting with operator ${K}_a$, we get
$$
{K}_a \phi_0 = \frac{1}{2}(1-e^{-\frac{{(a t)^2}}{\sqrt{1+a^2}}}).
$$
Evaluating $a\phi_{0}^3+(1-a)\phi_{0}$ we obtain
$$
a\phi_{0}^3+(1-a)\phi_{0} \leq {K}_a\phi_0,
$$
i.e.
$$
a\phi_{0}^3+(1-a)\phi_{0} \leq a\phi_{1}^3+(1-a)\phi_{1},
$$
subsequently $\phi_0 \leq \phi_1$.
\begin{figure}
\includegraphics[width=5cm]{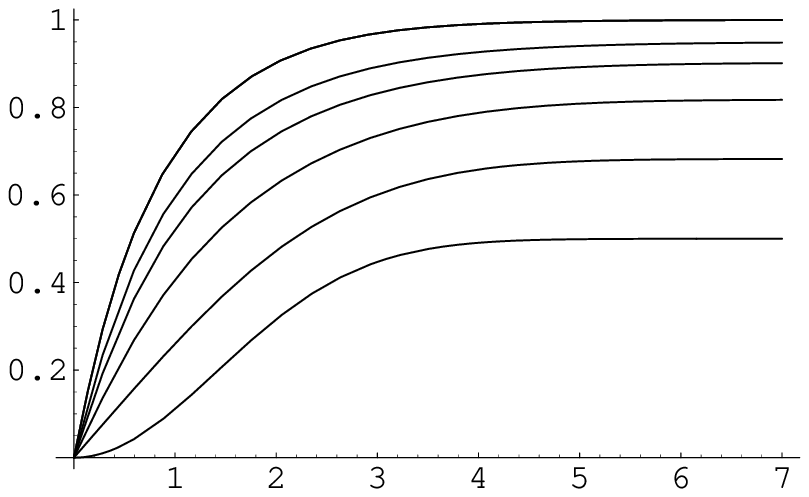}~a)~~~~~~~~~~~~~
\includegraphics[width=5cm]{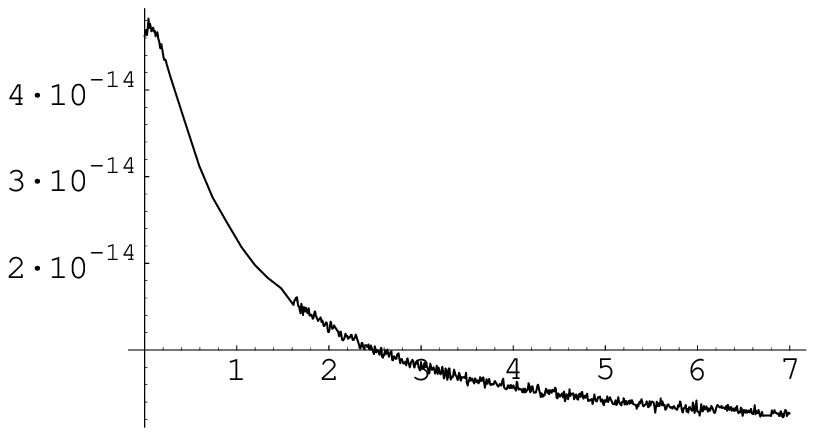}~b)
\caption{a) results of iterative procedure
$\phi_0$, $\phi_1$, $\phi_2$, $\phi_3$, $\phi_4$, $\phi_{50}$, $\phi_{150}$
from below to up correspondingly;
b)--the difference between $\phi_{50}$ and $\phi_{150}$-th iterations.}
\label{iterLY}
\end{figure}
Taking into account the fact that the kernel is nonnegative
after integration we obtain $K_{a}\phi_0 \leq K_{a}\phi_1$,
i.e. $B_0 \leq B_1.$

Using explicit expressions (\ref{vn}) and (\ref{phin}) for the
functions $v_{n}$ and $\phi_{n+1}$ we conclude that inequality
$B_0 \leq B_1$, implies inequalities  $v_0 \geq v_1$, and
$\phi_{1}\leq \phi_{2}$. Repeating the above argument $(n-1)$
times we obtain
$$
\phi_0 \leq \phi_1\leq...\leq \phi_n \leq \phi_{n+1}.
$$
This illustrates the figure of calculations of the iterations
(see fig.\ref{iterLY}).

We now prove that
$$
\phi_{n} \leq 1.
$$
Iterative procedure have the form (\ref{itpr}). We can see that
initial iteration $\phi_0$ is bounded, $\phi_0 < 1$. Then we
have $K_{a} \phi_0 \leq \phi_0<1$, and therefore
$a\phi_{1}^3+(1-a)\phi_{1}<1$. Let us suppose that there exists
$t_1 \geq 0$ such that $\phi_1(t_1)>1$, then
$a\phi_{1}^3+(1-a)\phi_{1}>\phi_1>1,$ which contradicts the
estimate obtained above and therefore $\phi_1 \leq 1$.
Repeating this argument $n$ times, we obtain that all the
functions $\phi_{n+1}$ are bounded, $\phi_{n+1} \leq 1$.

Let us now prove that functions $\phi_{n+1}$ are monotonic. It
is enough to prove that assumption that $\phi_{n}(t)$ is
nonnegative monotonically increasing function leads to
$$
\frac{d}{dt}\left({K}_a [\phi_{n}](t)\right)\geq 0
$$
We have
$$
\frac{d}{dt}\left({K}_a [\phi_{n}](t)\right)
=\int_{0}^{\infty}{\cal K}_{a}^{'}(t,\tau)\phi_{n}(\tau)d\tau
=\int_{0}^{\infty}\left({\cal K}_{1}^{'}(t,\tau)+{\cal K}_{2}^{'}(t,\tau)\right)
\phi_{n}(\tau)d\tau,
$$
where
$$
{{\cal K}_{1}^{'}}(t,\tau)=\frac{1}{2a\sqrt{4 \pi a}}
(\tau-t)e^{-\frac{(t-\tau)^2}{4a}}
$$
and
$$
{{\cal K}_{2}^{'}}(t,\tau)=\frac{1}{2a\sqrt{4 \pi a}}
(\tau+t)e^{-\frac{(t+\tau)^2}{4a}}
$$
Because for all $t,\tau\geq0$ the kernel
${\cal K}_{2}^{'}(t,\tau)\geq 0$ and the function $\phi_{n}(\tau)\geq 0$
and increases it follows that
\begin{multline*}
\frac{d}{dt}\left({K}_a [\phi_{n}](t)\right)
\geq\int_{0}^{\infty}{\cal K}_{1}^{'}(t,\tau)\phi_{n}(\tau)d\tau=
\\
=\int_{0}^{t}{\cal K}_{1}^{'}(t,\tau)\phi_{n}(\tau)d\tau
+\int_{t}^{\infty}{\cal K}_{1}^{'}(t,\tau)\phi_{n}(\tau)d\tau\geq
\\
\geq\int_{0}^{t}{\cal K}_{1}^{'}(t,\tau)\phi_{n}(t)d\tau
+\int_{t}^{\infty}{\cal K}_{1}^{'}(t,\tau)\phi_{n}(t)d\tau
=\phi_{n}(t)\int_{0}^{\infty}{\cal K}_{1}^{'}(t,\tau)d\tau\geq0
\end{multline*}
Using that $\phi_{0}(t)$ is nonnegative monotonically
increasing function, we have
$$
{K}_a [\phi_{0}](t_{0})\leq{K}_a [\phi_{0}](t_{1}),~~  t_{0}\leq t_{1}
$$
or
$$
a\phi_{1}^3(t_0)+(1-a)\phi_{1}(t_0) \leq  a\phi_{1}^3(t_1)+(1-a)\phi_{1}(t_1).
$$
It follows from the inequality above that
$\phi_1 (t_0) \leq \phi_1 (t_1)$ for  $t_0\leq t_1$.
Repeating this argument $n$ times we find that
$\phi_{n+1} (t_0) \leq \phi_{n+1} (t_1)$, for  $t_0\leq t_1$.
We thus have shown that the iterations $\{\phi_{n+1}\}$ are a
sequence of monotonically increasing functions, therefore there
is a limit \cite{Fix}
\begin{equation}
\lim_{n\rightarrow \infty} \phi_{n}(t)=f(t).
\end{equation}
Taking the limit as $n\rightarrow \infty$ in (\ref{itpr})
and using the Lebesque theorem \cite{VS,KM} we obtain equation
\begin{equation}
af^3+(1-a)f-K_af=0,
\end{equation}
where $f \in L_{\infty}[0, \infty]$. Thus the function $f$ is a
solution of equation $(\ref{IE-IP})$. The function $f$  is
bounded, $\phi_n \leq 1$, hence by theorem 2 the function is
continuous. Thus we have proved that our iterative procedure
converges to continuous solution $f$. The function $f$ is
monotonically increasing (since all $\phi_n$ are monotonically
increasing) and is bounded $0 \leq f(t) \leq 1$, so there
exists a limit $\lim\limits_{t \to +\infty} f(t)$. Hence the
function $f$ is bounded from below by initial iteration
$\phi_0$, and from above by one, $\phi_0 \leq f \leq 1$ and
$\lim\limits_{t \to +\infty} \phi_0 =1/2$. Now taking into
account theorem 4, $\lim\limits_{t \to +\infty} f(t)=1$. Thus
the function $f(t)$ is the solution of the boundary value
problem.

We have thus proved that iteration process (\ref{IE-IP})
converges to a continuous solution of the boundary problem
(\ref{krz2}) and (\ref{IE}).
\end{proof}

In conclusion, we have investigated properties of the integral
equation (\ref{eq1}) with infinitely many derivative, we have
constructed an iterative method for solving it and have proved
its convergence.

\section*{Acknowledgments}
The author would like to thank I.Ya. Aref'eva, M.I.Grek and
V.S. Vladimirov for fruitful discussion and valuables remarks.

This paper was supported in part by Dynasty Foundation, the International
Center for Fundamental Physics, the Russian Science Support Foundation,
The Russian Foundation for Basic Research (Grant No. 05-01-00578),
and INTAS (Grant No. 03-51-6346).

\flushleft


\begin{thebibliography}{99}

\bibitem{Sen}
A.~Sen, \textit{Rolling Tachyon},
JHEP 2002, 0204,  048\\
A.~Sen, \textit{Time Evolution in Open String Theory},
JHEP 2002, 0210,  003.

\bibitem{MolZw}
N.~Moeller and B.~Zwiebach,
\textit{Dynamics with Infinitely Many Derivatives and Rolling Tachyons},
JHEP 2002, 0210,  034.

\bibitem{Y}
Yaroslav Volovich,
\textit{Numerical Study of Nonlinear Equations with Infinite Number of Derivatives},
J.Phys.A: Math. Gen. 2003,  \textbf{36},  8685-8701.

\bibitem{AJ}
I.Ya.~Aref'eva, L.V.~Joukovskaya and A.S.~Koshelev,
\textit{Time Evolution in Super\-string Field Theory on non-BPS brane. I. Rolling Tachyon and
Energy--Momentum Conservation}, JHEP (2003) 0309 012.\\
I.Ya.~Aref'eva, Rolling Tachyon in NSSFT, 35-th Ahrenshoop
meeting, Fortschr. Phys., 2003, 51, 652;\\
I.Ya.~Aref'eva, \textit{Nonlocal String Tachyon as a Model for Cosmological Dark Energy},
arXiv:astro-ph/0410443.

\bibitem{MS}
Nicolas Moeller, Martin Schnabl,
\textit{Tachyon condensation in open-closed p-adic string theory},
JHEP 2004, 0401, 011.

\bibitem{Oh}
Kazuki Ohmori,
\textit{Toward Open-Closed String Theoretical Description of Rolling Tachyon},
Phys.Rev. D  2004, 69, 026008.

\bibitem{LY}
L.Joukovskaya, Ya.~Volovich,
\textit{Energy Flow from Open to Closed Strings in a Toy Model of Rolling Tachyon},
math-ph/0308034.

\bibitem{L1}
L.Joukovskaya,
\textit{Energy Conservation for p-Adic and SFT String Equations},
Proceedings of  Steklov Mathematical Institute, 2004,  245, 98.

\bibitem{BFOW}
L.~Brekke, P.G.~Freund, M.~Olson and E.~Witten,
\textit{Nonarchimedean String Dynamics},
Nucl. Phys., 1988, B302, p. 365.

\bibitem{FO}
P.H.~Frampton and Y.~Okada,
\textit{Effective Scalar Field Theory of $p$-Adic String},
Phys. Rev. D, 1988, v. 37, N 10, p.3077-3079.

\bibitem{BF}
L.~Brekke and P.G.O.~Freund,
\textit{$p$-Adic Numbers Physics},
Phys. Rep. (Rev. Sct. Phys. Lett.), 1993, 233, N 1, p.1-66.

\bibitem{VV} V.S. Vladimitov, Ya.I. Volovich,
\textit{On the nonlinear dynamical equation in the $p$-adic string theory},
Theor.Mat.Fiz. \textbf{138} (3), 297 (2004) [Theor.Mat.Fiz.
\textbf{138} (3), 355 (2004)].

\bibitem{VSV} V.S. Vladimirov,
\textit{On the equation of the  $p$-adic open string for the scalar field},
Izv. Math, {\bf 69}, 487 (2005).

\bibitem{Fix}
G.M.~Fikhtengol'ts,
\textit{Founfations of Claculus} [in Russian], Fizmatlit, Moscow (2002),
English transl. prev. ed.; The Fundamentals of Mathematical Physics (Int.
Ser. Monographs Pure Appl. Math., Vols. 72,73), Vols 1,2, Pergamon, Oxford (1965).

\bibitem{VS}
V.S. Vladimirov,
\textit{Equations of Mathematical Physics} [in Russian] (5th ed.),
Nauka, Moscow (1988);
English transl. prev.ed., Marcel Dekker, New York (1971).

\bibitem{KM}
A.N.~Kolmogorov, S.V.~Fomin,
\textit{Elements of the Theory of Foundations and Functional Analysis}
[in Russian], (4th ed.), Nauka, Moscow (1976);
English transl. prev. ed.; Vol.1, Metric and Normal Spaces, Graylock,
Rochester, N.Y. (1957); Vol.2, Measure, The Lebesque Integral, Hilbert Space,
Graylock, Albany, N.Y. (1961).


\end{thebibliography}
\end{document}